\documentclass[10pt, doublecolumn]{IEEEtran}
\usepackage{amsmath}
\usepackage{bbm}
\usepackage{graphicx}
\usepackage{epsfig}
\usepackage{array}
\usepackage{psfrag}
\usepackage{amssymb}
\usepackage{mathdots}
\usepackage{color}
\usepackage{pstricks,pst-node,pst-text,pst-3d,pst-plot}
\usepackage{psfrag}
\usepackage{enumerate}
\usepackage{url,cite}
\usepackage{amsfonts}%
\usepackage{amsthm}
\usepackage{dsfont}%
\usepackage{verbatim}%
\usepackage{setspace}
\usepackage{float}
\usepackage{url}
\usepackage{fancyhdr}
\usepackage{bm}
\pdfoptionpdfminorversion=6
\usepackage{lipsum} 
\newcommand\blfootnote[1]{%
  \begingroup
  \renewcommand\thefootnote{}\footnote{#1}%
  \addtocounter{footnote}{-1}%
\endgroup
}
\singlespacing
\usepackage{algorithm}
\usepackage{algorithmic}

\usepackage{cleveref}
\newcommand{\Bmax}{{B_{\rm max}}}

\newcommand{\Paux}{P_{\rm X}}
\newcommand{\Psiaux}{\Psi_{\rm X}}

\newcommand{\Oneik}{\mathds{1}_i(k)}
\newcommand{\Ai}{A_i}

\newcommand{\phimax}{\phi_{\rm max}}
\newcommand{\Rik}{R_i(k)}

\newcommand{\Qik}{Q_i(k)}

\newcommand{\Yik}{Y_i(k)}
\newcommand{\bfYk}{{\mathbf Y}(k)}
\newcommand{\bfQk}{{\mathbf Q}(k)}
\newcommand{\LW}[1]{W_0\lb #1e^{-1}\rb}

\newcommand{\gammaik}{\gamma_i(k)}

\newcommand{\NR}{N_{\rm R}}
\newcommand{\NNR}{N_{\rm NR}}

\newcommand{\sN}{\script{N}}
\newcommand{\Nk}{N_k}
\newcommand{\sNk}{\script{N}_k}
\newcommand{\sNR}{\script{N}_{\rm R}}

\newcommand{\sNNR}{\script{N}_{\rm NR}}

\newcommand{\sS}{\script{S}}

\newcommand{\SRT}{\script{S}_{\rm RT}}

\newcommand{\SRk}{\script{S}_{\rm R}\lb k\rb}
\newcommand{\SRstk}{\script{S}_{\rm R}^*\lb k\rb}

\newcommand{\SRkst}{\script{S}_{\rm R}^*\lb k\rb}
\newcommand{\SNRk}{\script{S}_{\rm NR}\lb k\rb}
\newcommand{\bfPRk}{\bfP\lb k\rb}

\newcommand{\PRik}{P_i\lb k\rb}
\newcommand{\PRikst}{P_i^*\lb k\rb}

\newcommand{\PNRik}{P_i\lb k\rb}
\newcommand{\bmuRi}{\overline{R}_i}

\newcommand{\bfmuRk}{{\bm \mu}\lb k\rb}

\newcommand{\muRik}{\mu_i\lb k\rb}

\newcommand{\muRistk}{\mu_{\ist}\lb k\rb}

\newcommand{\parPRdef}[1]{\triangleq \left[#1_i(k)\right]_{i\in\sN}}
\newcommand{\parPNRdef}[1]{\triangleq \left[#1_i(k)\right]_{i\in\sNNR}}

\newcommand{\PsiRik}{\Psi_{\rm R}(i,k)}

\newcommand{\PsiNRik}{\Psi_{\rm NR}(i,k)}
\newcommand{\PsiNRikst}{\Psi_{\rm NR}^*(i,k)}
\newcommand{\PsiNRistkst}{\Psi_{\rm NR}^*(\ist,k)}

\newcommand{\ist}{i_{\rm NR}^*}

\newcommand{\bfrk}{{\bm r}\lb k\rb}

\newtheorem{thm}{Theorem}
\newtheorem{lma}{Lemma}
\DeclareMathOperator{\E}{\mathbb{E}}

\newcommand{\lb}{\left (}
\newcommand{\rb}{\right )}

\newcommand{\script}[1]{{\mathcal {#1}}}
\newcommand{\Pavg}{P_{\rm avg}}
\newcommand{\Pmax}{P_{\rm max}}

\newcommand{\EE}[1]{\E \left[ #1 \right]}
\newcommand{\EEU}[1]{\E_{\bfU(k)} \left[ #1 \right]}

\newcommand{\bfP}{{\bf P}}

\newcommand{\bgamma}{\overline{\gamma}}

\newcommand{\bfQ}{{\bf Q}}

\newcommand{\bfY}{{\bf Y}}
\newcommand{\bfU}{{\bf U}}

\newcommand{\parRdef}[1]{\triangleq [#1_1(k),\cdots,#1_{\NR}(k)]^T}
\newcommand{\parNRdef}[1]{\triangleq [#1_1(k),\cdots,#1_{\NNR}(k)]^T}

\newcommand{\Rmax}{R_{\rm max}}
\newcommand{\gammamax}{\gamma_{\rm max}}

\newcommand{\Ts}{T}


\begin{document}
\title{Optimal Power Control and Scheduling under Hard Deadline Constraints for Continuous Fading Channels}

\author{Ahmed Ewaisha, Cihan Tepedelenlio\u{g}lu\\
\small{School of Electrical, Computer, and Energy Engineering, Arizona State University, USA.}\\
\small{Email:\{ewaisha, cihan\}@asu.edu}\\
}
\maketitle
\blfootnote{The work in this paper has been supported by NSF Grant ECCS-1307982.}
\begin{abstract}
We consider a  joint scheduling-and-power-allocation problem of a downlink cellular system. The system consists of two groups of users: real-time (RT) and non-real-time (NRT) users. Given an average power constraint on the base station, the problem is to find an algorithm that satisfies the RT hard deadline constraint and NRT queue stability constraint. We propose a sum-rate-maximizing algorithm that satisfies these constraints. We also show, through simulations, that the proposed algorithm has an average complexity that is close-to-linear in the number of RT users. The power allocation policy in the proposed algorithm has a closed-form expression for the two groups of users. However, interestingly, the power policy of the RT users differ in structure from that of the NRT users. We also show the superiority of the proposed algorithms over existing approaches using extensive simulations.
\end{abstract}

\section{Introduction}
Quality-of-service-based scheduling has received much attention recently. It is shown in \cite{Lai20131689}, \cite{ewaisha2015joint} and \cite{piro2011two} that quality-of-service-aware scheduling results in a better performance compared to best-effort techniques. For example, real-time audio and video applications require algorithms that take hard deadlines into consideration. This is because if a real-time packet is not transmitted on time, the corresponding user might experience intermittent connectivity to its audio or video.

The problem of scheduling for wireless systems under hard-deadline constraints has been widely studied in the literature (see, e.g., \cite{hou2011survey} and \cite{radhakrishnan2016review} for a survey).
 In \cite{A_Theory_of_QoS} the authors consider binary erasure channels and present a sufficient and necessary condition to determine if a given problem is feasible. The work is extended in \cite{hou2010scheduling} to consider general channel fading models. Unlike the time-framed assumption in these works, the authors of \cite{kang2013performance} assume that arrivals and deadlines do not have to occur at the edges of a time frame. In \cite{Elastic_Inelastic} the authors study the scheduling problem in the presence of real-time and non-real-time data. Unlike real-time (RT) data, non-real-time (NRT) data do not have strict deadlines but have an implicit stability constraint on the queues.

Power allocation has not been considered for RT users in the literature, to the best of our knowledge, except in \cite{Ewai1703:Power} that considers on-off fading channels. In this paper, we study a throughput maximization problem in a downlink cellular system serving RT and NRT users simultaneously. We formulate the problem as a joint scheduling-and-power-allocation problem to maximize the sum throughput of the NRT users subject to an average power constraint on the base station (BS), as well as a QoS constraint for each RT user. This QoS constraint requires a minimum ratio of packets to be transmitted by a hard deadline, for each RT user. Perhaps the closest to our work are references \cite{Elastic_Inelastic} and \cite{Ewaisha_TVT2015}. The former does not consider power allocation, while the latter assumes that only one user can be scheduled per time slot. The contributions in this paper are as follows:
\begin{itemize}
	\item We present closed-form expressions for the power allocation policy. It is shown that the power allocation expressions for the RT and NRT users have a different structure.
	\item We present an optimal algorithm satisfying the average power constraint as well as the QoS constraint. We show through simulations that the complexity, in the number of users, of the proposed algorithm is close-to-linear.
\end{itemize}

More details on the results of this work is presented in \cite{Ewaisha_TVT2017}. The rest of this paper is organized as follows. In Section \ref{Model} we present the system model and the underlying assumptions. The problem is formulated in Section \ref{Problem_Formulation} and our optimal algorithm is proposed in Section \ref{Proposed_Algorithm}. Simulation results are presented in Section \ref{Results}. Finally, the paper is concluded in Section \ref{Conclusion}.


\section{System Model}
\label{Model}
We assume a time slotted downlink system with slot duration $\Ts$ seconds. The system has a single base station (BS) having access to a single frequency channel. There are $N$ users in the system indexed by the set $\sN\triangleq\{1, \cdots,N\}$. The set of users is divided into the RT users $\sNR\triangleq\{1,\cdots,\NR\}$, and NRT users $\sNNR\triangleq\{\NR+1,\cdots, N\}$ with $\NR$ and $\NNR\triangleq N-\NR$ denoting the number of RT and NRT users, respectively. Following \cite{A_Theory_of_QoS}, we model the channel between the BS and the $i$th user as a fading channel with power gain $\gammaik\in [0,\gammamax]$ where $\gammamax<\infty$ is the maximum channel gain that $\gammaik$ can take during the $k$th slot. Channel gains are fixed over the whole slot and change independently in subsequent slots and are independent across users. Moreover, the channel state information for all users are known to the BS at the beginning of each slot in a channel estimation technique that is out of the scope of this paper. The reader is referred to, for example, \cite{bari2015recognizing} on signal classification techniques that precede the channel estimation phase if the modulation scheme was unknown.

\subsection{Packet Arrival Model}
Let $a_i(k)\in\{0,1\}$ be the indicator of a packet arrival for user $i\in\sN$ at the beginning of the $k$th slot. $\{a_i(k)\}$ is assumed to be a Bernoulli process with rate $\lambda_i$ packets per slot and assumed to be independent across all users in the system. Packets arriving at the BS for the RT users are called real-time packets. RT packets have a strict transmission deadline. If an RT packet is not transmitted by this deadline, this packet is dropped out of the system and does not contribute towards the throughput of the user. However, RT user $i$ is satisfied if it receives, on average, more than $q_i\%$ of its total number of packets. We refer to this constraint as the QoS constraint for user $i$. Here we assume that real-time packets arriving at the beginning of the $k$th slot have their deadline at the end of this slot.

On the other hand, packets arriving to the BS for the NRT users can be transmitted at any point in time. Thus, packets for NRT user $i$ are stored, at the BS, at user $i$'s  (infinite-sized \cite{Bertsekas_Data_Networks}) buffer and served on a first-come-first-serve basis. Since the arrival rate $\lambda_i$, for NRT user $i$, might be higher than what the system can support, we define $r_i(k)$ as an admission controller for user $i$ at slot $k$. At the beginning of slot $k$, the BS sets $r_i(k)$ to $1$ if the BS decides to admit user $i$'s arrived packet to the buffer, and to $0$ otherwise. The time-average number of packets admitted to user $i$'s buffer is $\Ai\triangleq \limsup_{K\rightarrow \infty}\frac{1}{K}\sum_{k=1}^K \EE{r_i(k)}$ for all $i\in\sNNR$. 
And the queue associated with NRT user $i$ is given by
\begin{equation}
Q_i(k+1)=\lb \Qik + Lr_i(k)-\muRik\Rik\rb^+,
\label{Queues}
\end{equation}
$i\in\sNNR$, where $r_i(k)$ is the admission control decision variable for NRT user $i$ at the beginning of slot $k$. We note that no admission controller is defined for the RT users since their buffers cannot build up due to the presence of a deadline.

\subsection{Service Model}
\begin{figure}%
\centering
\includegraphics[width=0.9\columnwidth]{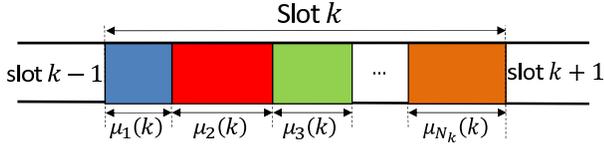}%
\caption{In the $k$th time slot, the BS chooses $\Nk\triangleq\vert\SRk\cup\SNRk\vert$ users to be scheduled. All time slots have a fixed duration of $\Ts$ seconds.}%
\label{Time_Slot}%
\end{figure}
Following \cite{hou2010scheduling} we assume that more than one user can be scheduled in one time slot. However, due to the existence of a single frequency channel in the system, the BS transmits to the scheduled users sequentially as shown in Fig. \ref{Time_Slot}. At the beginning of the $k$th slot, the BS selects a set of RT users denoted by $\SRk\subseteq\sNR$ and a set of NRT users $\SNRk\subseteq\sNNR$ to be scheduled during slot $k$. Moreover, the BS assigns an amount of power $\PRik$ for every user $i\in\sNk$. This dictates the transmission rate for each user according to the channel capacity given by
\begin{equation}
\Rik=\log \lb 1+\PRik\gammaik\rb.
\label{Rate_On_Off}
\end{equation}
Finally, the BS determines the duration of time, out of the $\Ts$ seconds, that will be allocated for each scheduled user. We define the variable $\muRik$ to represent the duration of time, in seconds, assigned for user $i\in\sN$ during the $k$th slot (Fig. \ref{Time_Slot}). Hence, $\muRik\in[0,\Ts]$ for all $i\in\sN$. The BS decides the value of $\muRik$ for each user $i\in\sN$ at the beginning of slot $k$. Since RT users have a strict deadline, then if an RT user is scheduled at slot $k$, then it should be allocated the channel for a duration of time that allows the transmission of the whole packet. Thus we have
\begin{equation}
\muRik=\left\{
\begin{array}{lll}
	\frac{L}{\Rik} &\mbox{ if }i\in\SRk\\
	0 &\mbox{ if }i\in\sNR\backslash\SRk
\end{array}
\right.,
\label{Num_Slots}
\end{equation}where $L$ is the number of bits per packet, that is assumed to be fixed for all packets in the system. Equation \eqref{Num_Slots} means that, depending on the transmission power, if RT user $i$ is scheduled at slot $k$, then it is assigned as much time as required to transmit its $L$ bits. Hence, unlike for the NRT users where $\muRik\in[0,T]$, $\muRik$ is further restricted to the set $\{0,L/\Rik\}$ for the RT users. For ease of presentation, we denote $\bfQ(k)\parNRdef{Q}$. In the next section we present the problem formally.

\section{Problem Formulation}
\label{Problem_Formulation}
We are interested in finding the scheduling and power allocation algorithm that maximizes the sum-rate of all NRT users subject to the system constraints. In this paper we restrict our search to slot-based algorithms which, by definition, take the decisions only at the beginning of the time slots.

Now define the average rate of user $i\in\sNNR$ to be $\bmuRi\triangleq \liminf_{K\rightarrow \infty}\sum_{k=1}^K\muRik \Rik/(L\Ts K)$ packets per slot. Thus the problem is to find the scheduling, power allocation and packet admission decisions at the beginning of each slot, that solve the following problem
\begin{eqnarray}
\label{Prob_DL}
&\text{maximize}&\sum_{i\in\sNNR}\bmuRi,\\
&\text{subject to } &r_i(k)\leq a_i(k) \forall i\in\sNNR,
\label{Admission_Decision}\\
&& \limsup_{k\rightarrow\infty}\EE{\Qik}<\infty\forall i\in\sNNR,
\label{NRT_QoS}\\
&&\bmuRi\geq\lambda_i q_i \forall i\in\sNR,
\label{RT_QoS}\\
&&\limsup_{K\rightarrow \infty}\sum_{k\geq1,i\in\sN}\frac{\PRik \muRik}{K\Ts}\leq \Pavg,
\label{P_avg}\\
&& 0\leq\PRik\leq \Pmax \forall i\in\sN,
\label{P_max}\\
&&\sum_{i\in\sN}\muRik= \Ts \forall k\geq 1,
\label{Single_Tx_at_a_Time}
\end{eqnarray}
where the decision variables are $\bfmuRk\parPRdef{\mu}$, $\bfPRk\parPRdef{P}$ and $\bfrk\parPNRdef{r}$, $\forall k\geq1$. Constraint \eqref{Admission_Decision} says that no packets should be admitted to the $i$th buffer if no packets arrived for user $i$. Constraint \eqref{NRT_QoS} means that the queues of the NRT users have to be stable. Constraint \eqref{RT_QoS} is the RT users' QoS constraint. Constraint \eqref{P_avg} is an average power constraint on the BS transmission power. Finally, constraint \eqref{Single_Tx_at_a_Time} guarantees that the sum of durations of transmission of all scheduled users does not exceed the slot duration $\Ts$. In this paper, we assume that the scheduled NRT user has enough packets, at each slot, to fit the whole slot duration which is a valid assumption in the heavy traffic regime.

\section{Proposed Algorithm}
\label{Proposed_Algorithm}
We use the Lyapunov optimization technique \cite{li2011delay} to find an optimal algorithm that solves \eqref{Prob_DL}. We do this on three steps: i) We define, in Section \eqref{Prob_Decouple} a ``virtual queue'' associated with each average constraint in problem \eqref{Prob_DL}. This helps in decoupling the problem across time slots. ii) In Section \ref{Motivation_DL}, we define a Lyapunov function, its drift and a, per-slot, reward function. iii) Based on the virtual queues and the Lyapunov function, we form and solve an optimization problem, for each slot $k$, that minimizes the drift-minus-reward expression. The solution of this problem is the proposed power allocation and scheduling algorithm.

\subsection{Problem Decoupling Across Time Slots}
\label{Prob_Decouple}
We define a virtual queue associated with each RT user as follows
\begin{equation}
Y_i(k+1)=\lb \Yik + a_i(k)q_i-\Oneik\rb^+, \hspace{0.25in} i\in\sNR,
\label{DL_VQ}
\end{equation}
where $\Oneik\triangleq\mathds{1}\lb\muRik\rb$ with $\mathds{1}(\cdot)=1$ if its argument is non-zero and $\mathds{1}(\cdot)=0$ otherwise. For notational convenience we denote ${\bfY}(k)\parRdef{Y}$. $\Yik$ is a measure of how much constraint \eqref{RT_QoS} is violated for user $i$. We will later show a sufficient condition on $Y_i(k)$ for constraint \eqref{RT_QoS} to be satisfied. Hence, we say that the virtual queue $Y_i(k)$ is associated with constraint \eqref{RT_QoS}. Similarly, we define the virtual queue $X(k)$, associated with constraint \eqref{P_avg}, as
\begin{equation}
X(k+1)=\lb X(k) + \frac{\sum_{i\in\sN}\PRik \muRik}{\Ts}-\Pavg\rb^+.
\label{P_avg_VQ}
\end{equation}
To provide a sufficient condition on the virtual queues to satisfy the corresponding constraints, we use the definition of \emph{mean rate stability} of queues \cite[Definition 1]{li2011delay} to state the following lemma.


\begin{lma}
\label{Mean_Rate_Lemma}
If, for some $i\in\sNNR$, $\{Y_i(k)\}_{k=0}^\infty$ is mean rate stable, then constraint \eqref{RT_QoS} is satisfied for user $i$.
\end{lma}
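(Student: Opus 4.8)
The plan is to read off the claim from the one-step recursion \eqref{DL_VQ} for the virtual queue $Y_i(k)$, combined with the definition of mean rate stability from \cite{li2011delay}, namely $\lim_{k\to\infty}\EE{Y_i(k)}/k=0$ (recall $Y_i(k)\ge 0$). The point is that iterating \eqref{DL_VQ} telescopes the per-slot ``deficit'' $a_i(k)q_i-\Oneik$, so a sublinear bound on $\EE{Y_i(k)}$ forces the time-average of $\Oneik$ to dominate $\lambda_i q_i$; and by \eqref{Num_Slots} the quantity $\Oneik$ is, for an RT user, exactly $\muRik\Rik/L$, i.e.\ the indicator that user $i$'s packet was served by its deadline, whose long-run average is $\bmuRi$.

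Concretely, I would first drop the projection in \eqref{DL_VQ} to obtain, on every sample path, $Y_i(k+1)\ge Y_i(k)+a_i(k)q_i-\Oneik$. Summing over $k=1,\dots,K$ telescopes to $Y_i(K+1)-Y_i(1)\ge q_i\sum_{k=1}^{K}a_i(k)-\sum_{k=1}^{K}\Oneik$; using $Y_i(1)\ge 0$, rearranging, taking expectations with $\EE{a_i(k)}=\lambda_i$, and dividing by $K$ yields
\begin{equation}
\frac{1}{K}\sum_{k=1}^{K}\EE{\Oneik}\ \ge\ \lambda_i q_i-\frac{\EE{Y_i(K+1)}}{K}.
\label{eq:meanratepf}
\end{equation}
Taking $K\to\infty$ in \eqref{eq:meanratepf} and invoking mean rate stability to kill the last term gives $\liminf_{K\to\infty}\frac{1}{K}\sum_{k=1}^{K}\EE{\Oneik}\ge\lambda_i q_i$. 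Finally, substituting $\muRik\Rik=L\Oneik$ (which holds for RT users by \eqref{Num_Slots}) identifies the left-hand side with $\bmuRi$ up to the normalization constant in its definition, giving \eqref{RT_QoS}.

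The step I expect to be the main obstacle is the mismatch in mode of convergence: mean rate stability is a statement about $\EE{Y_i(k)}$, so the clean conclusion produced above is the \emph{expected} time-average bound $\liminf_K\frac1K\sum_k\EE{\Oneik}\ge\lambda_i q_i$, whereas $\bmuRi$ in \eqref{RT_QoS} is written as a pathwise $\liminf$. Reconciling the two requires either interpreting \eqref{RT_QoS} in the expectation sense standard in Lyapunov-drift arguments, or adding a concentration step: since the increments $a_i(k)q_i-\Oneik$ are bounded, a strong-law or Borel--Cantelli argument upgrades the expected-average inequality to an almost-sure one. A smaller point is to be careful that $\EE{Y_i(K+1)}/K\to 0$ is a genuine limit (not just a $\limsup$), so that it can be moved across \eqref{eq:meanratepf} without weakening the $\liminf$; the rest is bookkeeping.
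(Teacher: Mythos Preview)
The paper does not actually prove this lemma; it is stated and then used without justification. Your proposal is the standard virtual-queue telescoping argument (exactly the one behind \cite[Lemma~1]{li2011delay}) and is correct. The issues you flag are real but are artifacts of the paper's presentation rather than gaps in your reasoning: (i) the lemma statement has a typo, writing $i\in\sNNR$ where $i\in\sNR$ is meant, since $Y_i$ is defined only for RT users; (ii) the quantity $\bmuRi$ is formally defined only for NRT users and carries an extra $1/\Ts$ normalization that would not match your identification $\muRik\Rik=L\Oneik$---for RT users the intended meaning of $\bmuRi$ is simply the long-run fraction of packets served by their deadline, i.e.\ $\liminf_K\frac{1}{K}\sum_{k=1}^K\Oneik$, which is exactly what your argument bounds; and (iii) the pathwise-versus-expectation discrepancy you point out is genuine, and in this literature \eqref{RT_QoS} is routinely interpreted in the expected-time-average sense, so your expected bound is what is actually needed.
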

Lemma \ref{Mean_Rate_Lemma} shows that when the virtual queue $\Yik$ is mean rate stable, then constraint \eqref{RT_QoS} is satisfied for user $i\in\sNNR$. Similarly, if $\{X(k)\}_{k=0}^\infty$ is mean rate stable, then constraint \eqref{P_avg} is satisfied. Thus, our objective would be to devise an algorithm that guarantees the mean rate stability of both $\left[Y_i(k)\right]_{i\in\sNR}$ and $X(k)$.

\subsection{Applying the Lyapunov Optimization}
\label{Motivation_DL}
The quadratic Lyapunov function is defined as
\begin{equation}
L_{\rm yap}\lb U(k)\rb\triangleq \frac{1}{2}\sum_{i\in\sNR}{Y_i^2(k)}+\frac{1}{2}\sum_{i\in\sNNR}{Q_i^2(k)}+\frac{1}{2}X^2(k),
\label{Lyapunov_Func}
\end{equation}
where $\bfU(k)\triangleq \lb\bfYk,\bfQk,X(k)\rb$, and the Lyapunov drift as $\Delta (k) \triangleq \E_{U(k)}[L_{k+1}\lb {\bf U}(k+1)\rb - L_{\rm yap}\lb \bfU(k)\rb]$ where $\EEU{x}\triangleq \EE{x\vert U(k)}$ is the conditional expectation of the random variable $x$ given $U(k)$. Squaring \eqref{Queues}, \eqref{DL_VQ} and \eqref{P_avg_VQ} taking the conditional expectation then summing over $i$, the drift becomes bounded by
\begin{equation}
\Delta(k)\leq \frac{C_1}{2}+\Psi(k),
\label{Drift_Bound}
\end{equation}
where $C_1\triangleq\sum_{i\in\sNR}\lb q_i^2+1\rb+\Pmax^2+\Pavg^2+\NNR L^2+\NNR\Ts^2\Rmax^2$ and we use $\Rmax\triangleq\log\lb1+\Pmax\rb$, while
\begin{align}
\nonumber\Psi(k)\triangleq &\sum_{i\in\sNR}\EEU{\Yik\lb \lambda_i q_i-{\Oneik}\rb}\\
\nonumber+&X(k)\lb\sum_{i\in\sN}\frac{\EEU{\muRik\PRik}}{\Ts}-\Pavg\rb\\
+&\sum_{i\in\sNNR}\Qik\lb \EEU{Lr_i(k)-\muRik\Rik}\rb.
\label{Psi_k}
\end{align}
We define $\Bmax$ as an arbitrarily chosen positive control parameter that controls the performance of the algorithm. We shall discuss the tradeoff on choosing $\Bmax$ later on. Since $\EEU{Lr_i(k)}$ represents the average number of bits admitted to NRT user $i$'s buffer at slot $k$, we refer to $\Bmax \sum_{i\in\sNNR}\EEU{Lr_i(k)}$ as the ``reward term''. We subtract this term from both sides of \eqref{Drift_Bound}, then use \eqref{Psi_k} and rearrange to bound the drift-minus-reward term as
\begin{multline}
\Delta(k)-\Bmax \sum_{i\in\sNNR}\EEU{Lr_i(k)}\leq C_1-X(k)\Pavg\\+\EEU{\sum_{i\in\sNR}\PsiRik}+\EEU{\sum_{i\in\sNNR}\PsiNRik\muRik}\\
+\EEU{\sum_{i\in\sNNR}\lb\Qik-\Bmax \rb Lr_i(k)}+\sum_{i\in\sNR}\Yik\lambda_i q_i,
\label{Drift_minus_Reward_Bound}
\end{multline}
where $\PsiRik\triangleq \lb\Yik-\frac{L}{\Ts\Rik}X(k)\PRik\rb\Oneik$ for all $i\in\sNR$ and $\PsiNRik\triangleq \Qik\Rik-\frac{X(k)\PRik}{\Ts}$ for all $i\in\sNNR$
The proposed algorithm schedules the users, allocates their powers and controls the packet admission to minimize the right-hand-side of \eqref{Drift_minus_Reward_Bound} at each slot. Since the only term in right-hand-side of \eqref{Drift_minus_Reward_Bound} that is a function in $r_i(k)$ $\forall i\in\sNNR$ is the fourth term, we can decouple the admission control problem from the joint scheduling-and-power-allocation problem. Minimizing this term results in the following admission controller: set $r_i(k)=a_i(k)$ if $Q_i(k)<\Bmax$ and $0$ otherwise. Minimizing the remaining terms yields
\begin{equation}
\begin{array}{ll}
	&\text{maximize}\sum_{i\in\SRk}\PsiRik + \sum_{i\in\sNNR}\PsiNRik\muRik\\
	&\text{subject to } \eqref{P_max} \text{ and } \eqref{Single_Tx_at_a_Time}.
\end{array}
\label{Max_Prob}
\end{equation}
with decision variables $\bfPRk$ and $\bfmuRk$. This is a per-slot optimization problem the solution of which is an algorithm that minimizes the upper bound on the drift-minus-reward term defined in \eqref{Drift_minus_Reward_Bound}. Next we show how to solve this problem in an efficient way.
\vspace{-0.1in}
\subsection{Efficient Solution for the Per-Slot Problem}
\label{Efficient_Solution}
To solve this problem optimally, we first find the optimal power-allocation-and-scheduling policy for the NRT users through the following lemma.

\begin{lma}
\label{NRT_Lemma_Cont}
If user $i\in\sNNR$ is scheduled to transmit any of its NRT data during the $k$th slot, then the optimum power level for this NRT w.r.t. problem \eqref{Max_Prob} in the continuous fading case is given by
\begin{equation}
\PNRik=\min\lb\lb\frac{Q_i(k)}{X(k)}-\frac{1}{\gammaik}\rb^+,\Pmax\rb.
\label{H2O_Pow_Cont}
\end{equation}
Moreover, in the heavy traffic regime, the scheduled NRT user, if any, that optimally solves problem \eqref{Prob_DL} is $\ist=\arg\max_{i\in\sNNR}\PsiNRikst$ with ties broken randomly uniformly, while
$\PsiNRikst\triangleq \Qik\log \lb\Qik\rb-\Qik+\frac{X(k)}{\gammaik}-\Qik\log\lb \frac{X(k)}{\gammaik}\rb$.
\end{lma}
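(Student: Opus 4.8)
The statement concerns the per-slot problem \eqref{Max_Prob}, whose slot-by-slot solution yields the optimal algorithm for \eqref{Prob_DL} through the Lyapunov drift-minus-reward bound \eqref{Drift_minus_Reward_Bound}. I would therefore solve the NRT part of \eqref{Max_Prob} in two stages: first optimize the transmit powers $\PNRik$ with the scheduling and the time allocation held fixed, and then optimize the NRT durations $\muRik$.

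For the power stage, I would fix a scheduling decision and a scheduled NRT user $i$, so that $\muRik>0$. In \eqref{Max_Prob} the variable $\PNRik$ appears only in the term $\PsiNRik\muRik$, and since $\muRik>0$ is a fixed positive factor this reduces to maximizing $\PsiNRik=\Qik\log\lb1+\PNRik\gammaik\rb-X(k)\PNRik/\Ts$ over $\PNRik\in[0,\Pmax]$. This function is concave in $\PNRik$, so its maximizer on the closed interval is the projection onto $[0,\Pmax]$ of the stationary point obtained from $\partial\PsiNRik/\partial\PNRik=0$; solving that scalar equation and clipping to the interval gives \eqref{H2O_Pow_Cont}, with the endpoints $\PNRik=0$ and $\PNRik=\Pmax$ arising when the derivative is non-positive at $0$ or non-negative at $\Pmax$.

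For the time-allocation stage, I would first substitute \eqref{H2O_Pow_Cont} back into $\PsiNRik$ and simplify; in the regime where the water level in \eqref{H2O_Pow_Cont} is interior --- the relevant one in the heavy-traffic regime, where $\Qik$ is large --- this reproduces the displayed closed form for $\PsiNRikst$. With the powers set at their optima, the NRT part of the objective of \eqref{Max_Prob} is the linear function $\sum_{i\in\sNNR}\PsiNRikst\muRik$. In the heavy-traffic regime each scheduled NRT user has enough backlog to occupy an arbitrary fraction of the slot, so the durations $(\muRik)_{i\in\sNNR}$ are constrained only by non-negativity and by $\sum_{i\in\sNNR}\muRik=\Ts-\sum_{i\in\SRk}L/\Rik$, the slot time left by the RT users (any surplus being absorbed at zero cost by an NRT user transmitting at zero power, which is feasible by \eqref{P_max}). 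Maximizing a linear functional over this scaled simplex is attained at a vertex, i.e., by giving all of the leftover time to the single NRT user with the largest $\PsiNRikst$ --- and to none of them, beyond zero-rate padding, when that maximum is not positive --- with ties broken uniformly at random; hence $\ist=\arg\max_{i\in\sNNR}\PsiNRikst$.

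The algebra of the power stage and the closed-form evaluation of $\PsiNRikst$ are routine; the step that needs the most care is showing that scheduling at most one NRT user is \emph{globally} optimal, given that the slot-length equality \eqref{Single_Tx_at_a_Time} couples the NRT and RT users and that the time budget available to the NRT users is itself a decision variable tied to the RT scheduling and RT powers. The way around this is to observe that the NRT sub-problem is a linear program over a simplex whose scale --- but not whose optimal vertex --- depends on the RT decisions, so the identity of $\ist$ is unchanged when the RT variables are jointly optimized; one also has to dispose of the zero-power padding user and the non-interior cases of \eqref{H2O_Pow_Cont}, where the closed form for $\PsiNRikst$ must be replaced by the value of $\PsiNRik$ at the clipped power.
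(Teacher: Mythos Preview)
The paper omits the proof of this lemma for brevity, so there is no detailed argument to compare against. Your two-stage decomposition---first maximizing the concave $\PsiNRik$ over $P_i(k)\in[0,\Pmax]$ to obtain the water-filling form \eqref{H2O_Pow_Cont}, then observing that the resulting NRT objective is linear in the durations $\muRik$ over a simplex whose size (but not shape) depends on the RT decisions---is the standard derivation and is correct; this is almost certainly what the authors have in mind.
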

\begin{proof}
The proof is omitted for brevity.
\end{proof}

Lemma \ref{NRT_Lemma_Cont} presents the optimal power and scheduling policy for the NRT users. To solve for the scheduling and power allocation for the RT users, we first solve for $\PRik$ assuming a fixed subset $\SRk\subseteq\sNR$, then, the optimum set $\SRkst$ is the one that maximizes \eqref{Max_Prob}. The expression for $\PRik$ for the RT users is one of the main contributions of this paper and is presented in the following theorem.

\begin{thm}
\label{PRik_Lemma_Cont}
In the continuous-fading channel model, given some non-empty set $\SRk$, the power allocation policy
\begin{equation}
\PRik=\min\lb\frac{1}{\gammaik}\left[\frac{\tilde{\phi}\gammaik-1}{\LW{\left[\tilde{\phi}\gammaik-1\right]}}-1\right],\Pmax\rb,
\label{Lambert_Pow}
\end{equation}
$i\in\SRk$ with $\tilde{\phi}\triangleq\lb\PsiNRistkst+\phi\rb\Ts/X(k)$, is optimal w.r.t. \eqref{Max_Prob} when $\phi$ is set to a non-negative value that satisfies \eqref{Single_Tx_at_a_Time}.
\end{thm}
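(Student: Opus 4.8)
The plan is, for a fixed RT schedule $\SRk$, to eliminate the NRT variables via Lemma~\ref{NRT_Lemma_Cont}, reduce \eqref{Max_Prob} to a power-only minimization with a single coupling constraint, dualize that constraint, solve the resulting separable subproblem in closed form with the Lambert-$W$ function, and finish with a Lagrangian saddle-point argument that gives global optimality in spite of the nonconvexity of \eqref{Max_Prob}.

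\emph{Reduction.} For $i\in\SRk$, \eqref{Num_Slots} gives $\Oneik=1$ and $\muRik=L/\Rik$, so the RT users occupy $\sum_{i\in\SRk}L/\Rik$ seconds and \eqref{Single_Tx_at_a_Time} leaves $\Ts-\sum_{i\in\SRk}L/\Rik\ge 0$ seconds for the NRT users. For any fixed RT power vector the optimal NRT powers of Lemma~\ref{NRT_Lemma_Cont} do not depend on it and $\sum_{i\in\sNNR}\PsiNRik\muRik$ is linear in the nonnegative NRT durations, so the leftover time is optimally assigned entirely to $\ist$, contributing $\PsiNRistkst\lb\Ts-\sum_{i\in\SRk}L/\Rik\rb$, with $\PsiNRistkst\ge 0$ since $\PsiNRik$ vanishes at $\PRik=0$. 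Dropping the additive constant $\sum_{i\in\SRk}\Yik+\PsiNRistkst\Ts$, problem \eqref{Max_Prob} for the given $\SRk$ is equivalent to
\[
\min_{\{\PRik\}_{i\in\SRk}}\ \sum_{i\in\SRk}\frac{L}{\Rik}\lb\frac{X(k)\PRik}{\Ts}+\PsiNRistkst\rb\qquad\text{s.t.}\quad 0\le\PRik\le\Pmax,\ \ \sum_{i\in\SRk}\frac{L}{\Rik}\le\Ts .
\]

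\emph{Dualization and the per-user subproblem.} Attach a multiplier $\phi\ge 0$ to the time constraint and keep the box constraints inside the inner minimization; the Lagrangian decouples across $i$ into $g_i(P)\triangleq\frac{L}{\log(1+P\gammaik)}\lb\frac{X(k)P}{\Ts}+\PsiNRistkst+\phi\rb$. With $u\triangleq 1+P\gammaik$ and $\tilde{\phi}$ as in the statement, $g_i$ is proportional to $(u+\tilde{\phi}\gammaik-1)/\log u$, whose derivative has the sign of $\psi(u)\triangleq u\log u-u-(\tilde{\phi}\gammaik-1)$; since $\psi'(u)=\log u>0$ for $u>1$ and $\psi(1)=-\tilde{\phi}\gammaik\le 0$, the function $g_i$ is unimodal on $[0,\Pmax]$ with a unique stationary point $u^*\ge 1$ solving $u^*(\log u^*-1)=\tilde{\phi}\gammaik-1$. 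Setting $v\triangleq\log u^*-1$ turns this into $ve^{v}=(\tilde{\phi}\gammaik-1)e^{-1}$, so $v=\LW{\left[\tilde{\phi}\gammaik-1\right]}$, $u^*=(\tilde{\phi}\gammaik-1)/\LW{\left[\tilde{\phi}\gammaik-1\right]}$, and the unconstrained minimizer is $(u^*-1)/\gammaik$; clipping it to $[0,\Pmax]$ gives exactly \eqref{Lambert_Pow}.

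\emph{Choosing $\phi$ and concluding.} Because $u\mapsto u\log u-u$ is strictly increasing on $(1,\infty)$, $u^*$ and hence $\PRik$ in \eqref{Lambert_Pow} are continuous and nondecreasing in $\phi$, so $\sum_{i\in\SRk}L/\Rik$ is continuous and nonincreasing in $\phi$ with limit $\sum_{i\in\SRk}L/\log(1+\Pmax\gammaik)\le\Ts$ as $\phi\to\infty$ (the inequality being forced by feasibility of \eqref{Max_Prob} for this $\SRk$). By the intermediate value theorem there is a $\phi\ge 0$ --- either $\phi=0$ if the time constraint is already slack, or the root of $\sum_{i\in\SRk}L/\Rik=\Ts$ otherwise --- at which \eqref{Lambert_Pow} is primal feasible and $\phi\lb\sum_{i\in\SRk}L/\Rik-\Ts\rb=0$. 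Since \eqref{Lambert_Pow} at this $\phi$ also minimizes the Lagrangian over the box, the saddle-point inequality gives, for every feasible power vector, objective $\ge$ Lagrangian value $\ge$ Lagrangian value at \eqref{Lambert_Pow} $=$ objective at \eqref{Lambert_Pow}; hence \eqref{Lambert_Pow} is globally optimal for the reduced problem and thus for \eqref{Max_Prob} given $\SRk$. The main obstacle is exactly the nonconvexity of \eqref{Max_Prob}: the real work is (i) the unimodality of $g_i$, which makes the Lambert-$W$ stationary point the \emph{global} minimizer of each decoupled subproblem rather than just a critical point, and (ii) the saddle-point/complementary-slackness step that lifts per-subproblem optimality to global optimality of the coupled problem with no convexity assumption; the Lambert-$W$ inversion and the existence of a suitable $\phi$ are then routine.
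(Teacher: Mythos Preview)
Your argument is correct and matches the route signalled by the theorem statement itself: dualize the single coupling constraint \eqref{Single_Tx_at_a_Time} with a multiplier $\phi\ge 0$, solve the resulting separable per-user problem in closed form via the Lambert-$W$ identity, and recover global optimality through complementary slackness despite the nonconvex objective. The paper does not give an in-text proof here (it refers to \cite{Ewaisha_TVT2017}), but the structure of the statement --- the definition $\tilde{\phi}\triangleq(\PsiNRistkst+\phi)\Ts/X(k)$ and the clause ``when $\phi$ is set to a non-negative value that satisfies \eqref{Single_Tx_at_a_Time}'' --- is exactly the Lagrangian-plus-complementary-slackness scheme you carry out, so your proof is essentially the intended one. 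Your explicit unimodality check for $g_i$ and the accompanying saddle-point inequality are the right way to close the gap that first-order (KKT) conditions alone would leave in this nonconvex setting; that is the only nontrivial point, and you handle it cleanly.
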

\begin{proof}
See \cite{Ewaisha_TVT2017} for the complete proof.
\end{proof}

It is clear that the Lambert power policy in \eqref{Lambert_Pow} has a different structure than the water-filling policy in \eqref{H2O_Pow_Cont}. The reason is because the former is for transmitting packet that have hard deadlines. The following theorem, stated without proof due to lack of space, discusses the monotonicity of the Lambert power policy.
\begin{thm}
\label{Thm_Lambert_Monotonicity}
Let $\SRk$ be some scheduling RT set at slot $k$. The power $\PRik$ given by \eqref{Lambert_Pow} is monotonically decreasing in $\gammaik$ $\forall i\in \SRk$.
\end{thm}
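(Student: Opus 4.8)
The plan is to reduce the statement to a one‑variable monotonicity fact and settle it by elementary calculus. Fix the slot $k$ and the RT set $\SRk$. Then $\tilde{\phi}=\lb\PsiNRistkst+\phi\rb\Ts/X(k)$ does not depend on $\gammaik$: it is built from $X(k)$, $\Ts$, the non‑negative constant $\phi$ chosen to meet \eqref{Single_Tx_at_a_Time}, and the quantity $\PsiNRistkst$ of the best NRT user, none of which involves RT user $i$'s channel. Moreover $\tilde{\phi}\geq 0$, since $\phi\geq 0$ and $\PsiNRistkst\geq 0$: writing $a=\Qik$, $b=X(k)/\gammaik$ and $t=a/b$, one has $\PsiNRistkst=b\lb t\log t-t+1\rb\geq 0$. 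Hence it suffices to show that, for fixed $\tilde{\phi}\geq 0$, the map
\[
g\ \longmapsto\ P(g)\triangleq\frac{1}{g}\left[\frac{\tilde{\phi}g-1}{W_0\lb(\tilde{\phi}g-1)e^{-1}\rb}-1\right],\qquad g\triangleq\gammaik\in(0,\gammamax],
\]
is non‑increasing, since capping it at the constant $\Pmax$ preserves this.

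The key move is the substitution $w\triangleq W_0\lb(\tilde{\phi}g-1)e^{-1}\rb$, i.e. $w$ is characterized by $we^{w}=(\tilde{\phi}g-1)e^{-1}$, equivalently $\tilde{\phi}g-1=we^{w+1}$ and $\tilde{\phi}g=we^{w+1}+1$. For $g>0$ we have $\tilde{\phi}g-1\geq -1$, so the argument of $W_0$ lies in $(-1/e,\infty)$; thus $w$ is the principal‑branch value, $w>-1$, and $g\mapsto w$ is strictly increasing. Substituting, the bracket collapses to $e^{w+1}-1$, giving
\[
P=\frac{\tilde{\phi}\lb e^{w+1}-1\rb}{we^{w+1}+1}.
\]
It then remains to prove that this last expression is strictly decreasing in $w$ on $(-1,\infty)$; composing with the strictly increasing map $g\mapsto w$ yields strict decrease of $P$ in $\gammaik$ wherever the cap is inactive and weak decrease everywhere, which is the claim.

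For that remaining step I would differentiate: with $N(w)=e^{w+1}-1$ and $D(w)=we^{w+1}+1$ one computes $N'D-ND'=e^{w+1}\left[(w+2)-e^{w+1}\right]$, so ${\rm sign}\,P'(w)={\rm sign}\,h(w)$ with $h(w)\triangleq(w+2)-e^{w+1}$. Since $h(-1)=0$ and $h'(w)=1-e^{w+1}<0$ for $w>-1$, we get $h(w)<0$ on $(-1,\infty)$, hence $P'(w)<0$ there. The only delicate points — and the main, if modest, obstacle — are the boundary and degenerate cases: $\tilde{\phi}=0$ (then $P\equiv 0$ directly from its definition, trivially non‑increasing), the limit $g\downarrow 0$ (where $w\downarrow -1$ and $P$ extends continuously with value $0$), and checking that the $W_0$ appearing in \eqref{Lambert_Pow} is indeed the principal branch throughout the admissible range so that its monotonicity can be invoked; all of these follow from $\tilde{\phi}\geq 0$ and $\gammaik\in(0,\gammamax]$.
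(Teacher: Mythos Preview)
The paper states this theorem \emph{without proof} (it is deferred to the companion reference \cite{Ewaisha_TVT2017}), so there is no in-paper argument to compare your approach against. On its own merits your core argument is sound: the substitution $w=W_0\big((\tilde\phi g-1)e^{-1}\big)$ collapses the bracket to $e^{w+1}-1$ and reduces the question to the sign of $h(w)=(w+2)-e^{w+1}$ on $(-1,\infty)$; your computation $N'D-ND'=e^{w+1}h(w)$ and the sign analysis $h(-1)=0$, $h'(w)=1-e^{w+1}<0$ are correct.

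Two small corrections are in order. First, your justification for holding $\tilde\phi$ fixed is not quite right: the multiplier $\phi$ in Theorem~\ref{PRik_Lemma_Cont} is chosen so that \eqref{Single_Tx_at_a_Time} holds, and that constraint \emph{does} involve $\mu_i(k)=L/R_i(k)$ and hence $\gammaik$ for $i\in\SRk$. The intended reading of Theorem~\ref{Thm_Lambert_Monotonicity} is simply that formula \eqref{Lambert_Pow}, viewed as a map $\gammaik\mapsto\PRik$ with $\tilde\phi$ treated as a parameter (common to all $i\in\SRk$), is decreasing---exactly as one reads the monotonicity of the water-filling expression \eqref{H2O_Pow_Cont} with the level $Q_i(k)/X(k)$ held fixed. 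Your calculus then applies verbatim; only the verbal justification needs adjusting. Second, your boundary remark is inverted: as $g\downarrow 0$ (equivalently $w\downarrow -1$), writing $u=w+1$ gives $e^{w+1}-1\sim u$ while $we^{w+1}+1\sim u^2/2$, so $P\sim 2\tilde\phi/u\to+\infty$, not $0$. This is in fact consistent with the intuition the paper offers after the theorem---a vanishing channel gain forces unbounded power to push the fixed-length RT packet through---and in any case the limit at $g=0$ plays no role in the monotonicity claim on $(0,\gammamax]$.
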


In \cite{Ewaisha_TVT2017}, we plot \eqref{Lambert_Pow} and \eqref{H2O_Pow_Cont} versus $\gammaik$ to contrast the fact that, while the water-filling is an increasing function in the channel gain, the Lambert is a decreasing function in the channel gain. This is because the RT user has a single packet of a fixed length to be transmitted. If the channel gain increases, then the power decreases to keep the same transmission rate resulting in the same transmission duration of one slot.


The optimum scheduling algorithm for the RT users is to find, among all subsets of the set $\sNR$, the set that gives the highest objective function of \eqref{Max_Prob}.

The following theorem is stated as an effort to achieve an algorithm with a relatively small complexity.

\begin{thm}
\label{NRik_Lemma_Cont}
At slot $k$, for any set $\SRk$, if there exists some $i\notin\SRk$ and some $j\in\SRk$ such that $\Yik>Y_j(k)$ and $\gammaik>\gamma_j(k)$, then $\SRk$ cannot be an optimal RT set, with respect to problem \eqref{Max_Prob}, for the continuous channel model.
\end{thm}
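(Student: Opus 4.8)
The plan is to prove this by a single \emph{interchange (swap) argument}: if the scheduled RT set $\SRk$ contains a user $j$ that is uniformly dominated by an unscheduled RT user $i\in\sNR\backslash\SRk$ — meaning $\Yik>Y_j(k)$ and $\gammaik>\gamma_j(k)$ — then replacing $j$ by $i$ strictly increases the optimal value of \eqref{Max_Prob}. Concretely I would set $\SRk'\triangleq\lb\SRk\backslash\{j\}\rb\cup\{i\}$ and exhibit one feasible power/time allocation for \eqref{Max_Prob} under $\SRk'$ whose objective strictly exceeds the optimum attainable with $\SRk$. Since the RT scheduling step is, as noted after Theorem \ref{PRik_Lemma_Cont}, a maximization of \eqref{Max_Prob} over all subsets of $\sNR$, producing one subset that does strictly better than $\SRk$ shows $\SRk$ is not optimal.

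First I would fix an optimal allocation $\lb\bfPRk,\bfmuRk\rb$ for \eqref{Max_Prob} under $\SRk$ and record three facts about user $j$: (i) since $j$ is actually scheduled, \eqref{Num_Slots}--\eqref{Rate_On_Off} force $\mu_j(k)=L/\log\lb1+P_j(k)\gamma_j(k)\rb>0$ with $0<P_j(k)\le\Pmax$ and $\gamma_j(k)>0$ (otherwise $j$ could not deliver its $L>0$ bits in finite time); (ii) for a scheduled RT user the objective contribution is $\PsiRik=\Yik-\frac{X(k)}{\Ts}\muRik\PRik$, and once the transmission time $\muRik$ — equivalently the rate — is pinned, the consumed ``energy'' $\muRik\PRik$ is strictly decreasing in the channel gain; (iii) $X(k)\ge0$. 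Then I build the allocation for $\SRk'$: keep every user in $\SRk\backslash\{j\}$ and every NRT user exactly as in the optimal $\SRk$ allocation, and hand the new RT user $i$ the same duration $\mu_i'(k)=\mu_j(k)$, which by \eqref{Rate_On_Off} requires $P_i'(k)=P_j(k)\gamma_j(k)/\gammaik$. By (i) and $\gammaik>\gamma_j(k)$ this gives $0<P_i'(k)<P_j(k)\le\Pmax$, so \eqref{P_max} holds; and since $j$ releases $\mu_j(k)$ units of time while $i$ takes exactly $\mu_j(k)$, the total allocated time is unchanged, so \eqref{Single_Tx_at_a_Time} still holds and the time left for the NRT side — hence the entire second sum in \eqref{Max_Prob}, i.e. the NRT contribution described in Lemma \ref{NRT_Lemma_Cont} — is exactly what it was.

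The only change in the objective is that $j$'s term $Y_j(k)-\frac{X(k)}{\Ts}\mu_j(k)P_j(k)$ is replaced by $i$'s term $\Yik-\frac{X(k)}{\Ts}\mu_j(k)P_i'(k)$, and the difference equals $\lb\Yik-Y_j(k)\rb+\frac{X(k)}{\Ts}\mu_j(k)\lb P_j(k)-P_i'(k)\rb$: the first bracket is strictly positive by hypothesis and the second is nonnegative by (i) and (iii). Hence the constructed $\SRk'$-allocation strictly outperforms the optimum over $\SRk$, so $\SRk$ is not an optimal RT set. I expect the substance of the argument to be straightforward and the only real care to lie in the feasibility bookkeeping — the $\Pmax$ check and the exact accounting of the slot-length constraint \eqref{Single_Tx_at_a_Time}, which is precisely where $\gammaik>\gamma_j(k)$ and ``$j$ transmits'' are used — and in stating cleanly that the NRT term of \eqref{Max_Prob} is untouched given its form in Lemma \ref{NRT_Lemma_Cont}; the degenerate cases $\gamma_j(k)=0$ or $P_j(k)=0$ are excluded a priori because a scheduled RT user must deliver $L>0$ bits in finite time.
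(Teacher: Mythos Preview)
Your swap argument is correct: replacing the dominated scheduled RT user $j$ by the dominating unscheduled RT user $i$ at the \emph{same} transmission duration $\mu_j(k)$ keeps \eqref{Single_Tx_at_a_Time} intact, forces $P_i'(k)=P_j(k)\gamma_j(k)/\gammaik<P_j(k)\le\Pmax$ so \eqref{P_max} holds, leaves every NRT term in \eqref{Max_Prob} untouched, and changes the RT part of the objective by $\lb\Yik-Y_j(k)\rb+\frac{X(k)}{\Ts}\mu_j(k)\lb P_j(k)-P_i'(k)\rb>0$ since $X(k)\ge0$. That is exactly the kind of exchange argument one expects for a dominance result of this form.

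As for the comparison you asked about: the present paper does not actually contain a proof of Theorem~\ref{NRik_Lemma_Cont} --- it defers the argument to the journal version \cite{Ewaisha_TVT2017} --- so there is nothing in-paper to benchmark your approach against. Your proposal is self-contained and would serve as a complete proof here; the only cosmetic suggestion is to state explicitly at the outset that ``optimal RT set'' means the subset of $\sNR$ that maximizes the optimal value of \eqref{Max_Prob}, so that exhibiting a single feasible allocation for $\SRk'$ that beats the optimum over $\SRk$ immediately closes the argument.
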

\begin{proof}
See \cite{Ewaisha_TVT2017} for the complete proof.
\end{proof}
This theorem provides a sufficient condition for non-optimality. In other words, we can make use of this theorem to restrict our search algorithm to the sets that do not satisfy this property. Before presenting the proposed algorithm, we define the set $\SRT$ as the set of all possible subsets of the set $\sNR$.
\begin{algorithm}
\caption{Lambert-Strict Algorithm}
\begin{algorithmic}[1]
\label{Scheduling_Alg_Cont}
\STATE Define the auxiliary functions $\Psiaux(\cdot):\SRT\rightarrow \mathbb{R}_+$ and $\Paux(\cdot,\cdot):\SRT\times\sNR\rightarrow\mathbb{R}_+$.
\STATE Initialize $\Paux(\sS,i)=0$ for all $\sS\in\SRT$ and all $i\in\sNR$.
\STATE Find the user $\ist$ and its power as given in Lemma \ref{NRT_Lemma_Cont}.
\FOR {$\sS\in \SRT$}
\IF{$\exists$ some $i\notin\sS$ and some $j\in\sS$ such that $\Yik>Y_j(k)$ and $\gammaik>\gamma_j(k)$}
\STATE Set $\Psiaux(\sS)=-\infty$ and go to Step 4 (next set in $\SRT$).
\ENDIF
\STATE $\phi\leftarrow\phimax+\Delta\phi$
\WHILE{$\phi\muRik\neq 0$}
\STATE $\phi\leftarrow\phi-\Delta\phi$. Calculate $\PRik$ given by \eqref{Lambert_Pow} for all $i\in\sS$ and set $\muRistk= \Ts-\sum_{i\in\sS}\muRik$.
\ENDWHILE
\STATE Set $\Psiaux(\sS)=\sum_{i\in\sS} \lb Y_i(k)-X_i(k)\muRik\rb + \PsiNRistkst\muRistk$ and $\Paux(\sS,i)=\PRik$ $\forall i\in\sS$.
\ENDFOR
\STATE The scheduling set is $\SRstk=\arg\max_\sS\Psiaux(\sS)$.
\STATE Set $\PRikst=\Paux\lb\SRstk,i\rb$ for all $i\in\sNR$, set $\mu_{\ist}(k)=\Ts-\sum_{i\in\SRstk}\muRik$ and set $r_i(k)=a_i(k)$ if $Q_i(k)<\Bmax$ and $0$ otherwise $\forall i\in\sNNR$.
\STATE Update \eqref{Queues}, \eqref{DL_VQ} and \eqref{P_avg_VQ} at the end of the $k$th slot.
\end{algorithmic}
\end{algorithm}

\begin{thm}
\label{Optimality_Thm_Cont}
For the continuous channel model, if problem \ref{Prob_DL} is feasible, then for any $\Bmax >0$ Algorithm \ref{Scheduling_Alg_Cont} satisfies all constraints in \eqref{Prob_DL} and achieves an average sum throughput satisfying
\begin{equation}
\sum_{i\in\sNNR} \bmuRi\geq \sum_{i\in\sNNR}{\bmuRi^*} - \frac{C_1}{L\Bmax },
\label{Optimality_Eq_Cont}
\end{equation}
where $\bmuRi^*$ is the optimal rate for user $i$ w.r.t. \eqref{Prob_DL}.
\end{thm}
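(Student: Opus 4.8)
\emph{Proof proposal.} The plan is to run the standard Lyapunov drift-plus-penalty argument, with the control parameter $\Bmax$ in the role of the penalty weight. The first step is to observe that Algorithm~\ref{Scheduling_Alg_Cont} minimizes, slot by slot, the right-hand side of the drift-minus-reward bound \eqref{Drift_minus_Reward_Bound}. By the decomposition \eqref{Psi_k}, minimizing that right-hand side over the slot-$k$ decisions $(\bfmuRk,\bfPRk,\bfrk)$ decouples into solving the per-slot problem \eqref{Max_Prob} and, separately, minimizing $\sum_{i\in\sNNR}(\Qik-\Bmax)Lr_i(k)$ over the admission variables subject to \eqref{Admission_Decision}. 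Lemma~\ref{NRT_Lemma_Cont} supplies the NRT water-filling power and the maximizing index $\ist$ for the second group of summands in \eqref{Max_Prob}; Theorems~\ref{PRik_Lemma_Cont} and~\ref{NRik_Lemma_Cont}, together with the $\phi$ search loop in Algorithm~\ref{Scheduling_Alg_Cont}, pin down the RT powers \eqref{Lambert_Pow} and the optimal RT set under the coupling constraint \eqref{Single_Tx_at_a_Time}; and the threshold rule $r_i(k)=a_i(k)\mathds{1}\{\Qik<\Bmax\}$ minimizes the admission term. Because these choices are optimal for every realization of $\gammaik$ and $a_i(k)$, the algorithm also minimizes the conditional expectation in \eqref{Drift_minus_Reward_Bound} over every policy adapted to $\bfU(k)$ and the slot-$k$ observations, so the right-hand side of \eqref{Drift_minus_Reward_Bound} evaluated at the algorithm's decisions is no larger than at those of any such alternative policy.

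The second step is to compare against a near-optimal stationary ($\omega$-only) policy. Since \eqref{Prob_DL} is assumed feasible, the standard theory of stationary randomized policies (cf.\ \cite{li2011delay}) yields, for each $\epsilon>0$, a policy $\Pi^\epsilon$ whose slot-$k$ decisions depend only on $\{\gammaik\}$ and $\{a_i(k)\}$ and that satisfies $\E[\Oneik]\ge\lambda_iq_i$ for $i\in\sNR$, $\sum_{i\in\sN}\E[\muRik\PRik]/\Ts\le\Pavg$, $\E[Lr_i(k)-\muRik\Rik]\le0$ for $i\in\sNNR$, and $\sum_{i\in\sNNR}\E[Lr_i(k)]\ge L\big(\sum_{i\in\sNNR}\bmuRi^*-\epsilon\big)$. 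Substituting $\Pi^\epsilon$'s decisions into the bound from the first step, the $\Yik$-, $X(k)$- and $\Qik$-weighted groups of terms become nonpositive by the first three inequalities and the reward contributes $-\Bmax L(\sum_{i\in\sNNR}\bmuRi^*-\epsilon)$, so that for every $k$
\[ \Delta(k)-\Bmax L\sum_{i\in\sNNR}\EEU{r_i(k)}\ \le\ C_1-\Bmax L\sum_{i\in\sNNR}\bmuRi^*+\Bmax L\,\epsilon . \]

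The third step is to take full expectations of this inequality, sum over $k=0,\dots,K-1$, cancel the telescoping drift, use $L_{\rm yap}\ge0$ from \eqref{Lyapunov_Func}, divide by $K\Bmax L$, and let $K\to\infty$ and then $\epsilon\downarrow0$; this gives $\sum_{i\in\sNNR}\Ai\ge\sum_{i\in\sNNR}\bmuRi^*-C_1/(L\Bmax)$. Dropping the bounded reward term in the displayed inequality shows $\Delta(k)$ is at most an absolute constant, whence $\E[L_{\rm yap}(\bfU(K))]=O(K)$ and therefore $\E[\Yik]=O(\sqrt{K})=o(K)$ and $\E[X(k)]=o(K)$; thus $\{\Yik\}$ and $\{X(k)\}$ are mean rate stable, which gives \eqref{RT_QoS} by Lemma~\ref{Mean_Rate_Lemma} and \eqref{P_avg} by the corresponding statement for $X(k)$. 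For \eqref{NRT_QoS} I would not rely on mean rate stability but argue directly from the threshold: since $r_i(k)=0$ whenever $\Qik\ge\Bmax$ and at most $L$ bits enter the buffer per slot, $\Qik\le\max(Q_i(0),\Bmax+L)$ for all $k$; this same deterministic bound forces the long-run NRT service rate to equal the long-run admission rate $\Ai$, so $\sum_{i\in\sNNR}\bmuRi=\sum_{i\in\sNNR}\Ai$ and \eqref{Optimality_Eq_Cont} follows. Constraints \eqref{Admission_Decision}, \eqref{P_max} and \eqref{Single_Tx_at_a_Time} hold by construction of Algorithm~\ref{Scheduling_Alg_Cont}.

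The hard part will be the second step: producing the $\omega$-only comparison policy for \emph{this} model, where the equality constraint $\sum_{i\in\sN}\muRik=\Ts$ couples the sequential transmissions and the RT virtual queue \eqref{DL_VQ} depends on the decisions only through the discontinuous indicator $\Oneik$. The usual time-sharing / Carath\'eodory construction still goes through, but it has to be carried out so as to respect both features. Everything else is routine once that policy is in hand; the only other non-generic point is using the $\Bmax$-threshold to upgrade mean rate stability to the strong queue stability required by \eqref{NRT_QoS}.
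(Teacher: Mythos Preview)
Your proposal is correct and follows the same drift-plus-penalty template that the paper is built on; the paper itself does not give the argument here but defers to \cite{Ewaisha_TVT2017}, and what you outline (per-slot optimality of Algorithm~\ref{Scheduling_Alg_Cont} for the bound \eqref{Drift_minus_Reward_Bound}, comparison with an $\epsilon$-optimal stationary $\omega$-only policy, telescoping, then mean rate stability of $\{\Yik\}$ and $\{X(k)\}$ plus the $\Bmax$-threshold for strong stability of $\{\Qik\}$) is precisely that argument. Your identification of the only genuinely model-specific step---constructing the stationary comparison policy compatible with the equality constraint \eqref{Single_Tx_at_a_Time} and the indicator $\Oneik$ in \eqref{DL_VQ}---is also on point.
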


\begin{proof}
See \cite{Ewaisha_TVT2017} for the complete proof.
\end{proof}


Due to the problem being a combinatorial problem with a huge amount of possibilities, we could not reach a closed-form expression for the complexity order of this algorithm. However, simulations will show its complexity improvement over the exhaustive search algorithm.

\section{Simulation Results}
\label{Results}
We simulate the system assuming that all channels are statistically homogeneous, i.e. $\bgamma_i=1$ for all $i\in\sN$. Moreover, all RT users have homogeneous QoS constraints, thus $q_i=q$ for all $i\in\sNR$ for some parameter $q$. All parameter values used in the simulations are: $L=1$ bit, $\Bmax=100$, $\Ts=5$, $\Pavg=10$, $q=0.9$ and $\Pmax=20$. In Fig. \ref{Cont_Opt_vs_LambStrict_Complexity}, we plot the complexity of the Lambert-Strict algorithm as well as the exhaustive search algorithm with exponential complexity versus the number of users $\NR$. The complexity is measured in terms of the average number of iterations, per-slot, where we have to evaluate the objective function of \eqref{Max_Prob}. Since this complexity changes from a slot to the other, we plot the average of this complexity. As the number of users increases, the Lambert-Strict algorithm has an average complexity close to linear.
\begin{figure}%
\centering
\includegraphics[width=0.78\columnwidth]{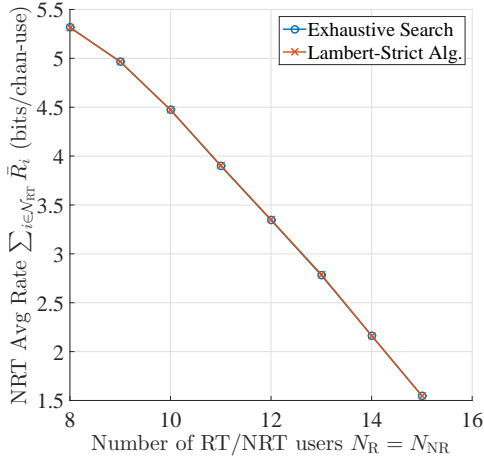}%
\caption{The Lambert-Strict Algorithm yields the same throughput as the exhaustive search algorithm but with a lower average complexity.}%
\label{Cont_Opt_vs_LambStrict_Throughput}%
\end{figure}

\begin{figure}%
\centering
\includegraphics[width=0.78\columnwidth]{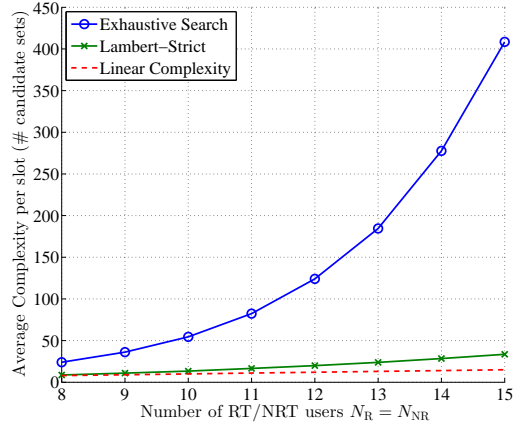}%
\caption{As the number of NRT users in the system increase the complexity increases exponentially for exhaustive search and nearly linear for the Lambert-Strict algorithm.}%
\label{Cont_Opt_vs_LambStrict_Complexity}%
\end{figure}

\section{Conclusions}
\label{Conclusion}
We discussed the problem of throughput maximization in downlink cellular systems in the presence of RT and NRT users. We formulated the problem as a joint power-allocation-and-scheduling problem. Using the Lyapunov optimization theory, we presented an optimal algorithm that solves the constrained throughput maximization problem. The complexity of the proposed algorithm is shown, through simulations, to have a close-to-linear complexity. Moreover, the power allocations are presented in closed-form expressions for the RT as well as the NRT users. We showed that the NRT power allocation is water-filling-like which is monotonically increasing in the channel gain. On the other hand, the RT power allocation has a totally different structure that we call the ``Lambert Power Allocation''. It is found that the latter is a decreasing function in the channel gain.

%



\bibliographystyle{IEEEbib}
\bibliography{MyLib}

\end{document}